\documentclass[11pt]{article}

\usepackage[utf8]{inputenc}
\usepackage[T1]{fontenc}
\usepackage[english]{babel}
\usepackage[margin=1in]{geometry}
\usepackage{lmodern}
\usepackage{microtype}
\usepackage{authblk}
\usepackage{graphicx}
\usepackage{subcaption}
\usepackage{booktabs}
\usepackage{multirow}
\usepackage{amsmath,amssymb,amsfonts,amsthm}
\usepackage{mathrsfs}
\usepackage{braket}
\usepackage[round,authoryear]{natbib}
\usepackage{xcolor}
\usepackage{hyperref}

\hypersetup{
    colorlinks=true,
    linkcolor=blue!50!black,
    citecolor=blue!50!black,
    urlcolor=blue!50!black,
    pdftitle={A Novel Steganography Scheme Using Quantum Hilbert Transform},
    pdfauthor={Nitin Jha and Abhishek Parakh}
}

\theoremstyle{plain}
\newtheorem{theorem}{Theorem}
\newtheorem{proposition}[theorem]{Proposition}

\theoremstyle{definition}
\newtheorem{definition}{Definition}

\theoremstyle{remark}

\newcommand{\safeincludegraphics}[2][]{%
    \IfFileExists{#2}{\includegraphics[#1]{#2}}{%
        \fbox{\parbox[c][0.18\textheight][c]{0.9\linewidth}{%
            \centering Missing figure file:\\[0.5em]
            \texttt{\detokenize{#2}}%
        }}%
    }%
}

\title{A Novel Steganography Scheme Using Quantum Hilbert Transform}

\author[1]{Nitin Jha%
\thanks{Corresponding author:
\href{mailto:njha1@students.kennesaw.edu}
{njha1@students.kennesaw.edu}}}

\author[1]{Abhishek Parakh}

\affil[1]{Kennesaw State University, Marietta, GA, USA}

\date{}

\begin{document}

\maketitle

\begin{abstract}
The main goal of steganography is to transmit hidden messages in legitimate-looking communication messages. Phase-domain information hiding, however, has not been fully explored for quantum systems. This work introduces a finite-dimensional Quantum Hilbert Transform (QHT) as a unitary phase operator based on the Quantum Fourier Transform. Using this construction, we develop a QHT-based quantum steganography scheme that embeds classical bits as weak signed phase perturbations of quantum cover states. Bob recovers the hidden message through binary state discrimination, block aggregation, and classical error-correcting decoding.
\end{abstract}

\noindent\textbf{Keywords:}
Signal Processing, Fourier Transform, Quantum Fourier Transform,
Quantum Hilbert Transform, Quantum Steganography

\vspace{1em}

\section{Introduction}

Steganography is one of the oldest techniques of coordinating covert communication between two parties. The main goal of steganography is to hide a secret message in an otherwise legitimate-looking communication message carrier. In classical steganography, this carrier can be either text, image, video, audio, or communication waveform. A common design principle is to embed information in a domain where small perturbations are difficult to perceive or detect, while still being recoverable by an intended receiver. Steganography may be used in the national security context of military communication, espionage, or for intellectual property protection through digital rights management. Numerous techniques have been explored in the literature \citep{hameed2022literature}. In this paper, we propose the use of the quantum Hilbert transform for hiding information within quantum communication \citep{jha2025quantumhilbert}.

The Hilbert transform is one of the most significant signal and image processing tools used to generate an analytic signal from a real-valued signal by introducing a $\pm\pi/2$ phase shift to its frequency components. This transformation has been of immense importance in cases where one needs to extract the instantaneous phase, frequency, or amplitude of any given signal, crucial in applications ranging from envelope detection in biomedical signals to feature extraction in radar systems \citep{Hilbert2006,Hahn1996HilbertTI,oppenheim2021discrete,Feldman2008TheoreticalAA,khare2022vhers}. Hilbert transform enables the use of Single Sideband modulation (SSB), which reduces the bandwidth requirements and thus has even been exploited to embed imperceptible covert channels in audio streams \citep{Li2011,hilbertHiding2006}. The use of the Hilbert transform in steganography is often done by using the Hilbert-Huang transform. There have been several studies that use the classical Discrete Hilbert transform (DHT) and its adaptive extension, the Hilbert--Huang transform (HHT), to hide and detect secret information in a variety of signal and image domains \citep{hilberthuang,kandregula2009,wu2011detection,sharma2022hilbert}.

To exploit the usefulness of the Hilbert transform in the digital domain, several discrete Hilbert transform (DHT) algorithms have been developed. \citet{Kak1970DHT} lifting-based DHT and subsequent trigonometric variants allow efficient computation via the discrete Fourier transform (DFT) \citep{kak1977discrete,DHT2}. Hilbert-Huang transform is an adaptive, two-stage signal analysis technique that first decomposes any nonlinear, nonstationary time series into a finite set of data-driven Intrinsic Mode Functions (IMFs) via Empirical Mode Decomposition. The Hilbert transform is then applied to determine instantaneous amplitude and frequency information. This gives us the \textit{Hilbert Spectrum}. \citet{tan2007steganalysis} extended this idea to \textit{Enhanced Bit-Plane Complexity Segmentation (BPCS) steganography} by applying empirical mode decomposition and the Hilbert transform to bit-plane complexity difference sequences. \citet{wu2011detection} employed the Hilbert--Huang transform to break an image's pixel-value histogram into intrinsic mode functions and then extracted instantaneous amplitude and frequency descriptors---features that, fed into a machine-learning classifier, achieve very high accuracy in detecting embedded data across diverse image formats.

Subsequently, in the field of quantum computing, the quantum Fourier transform (QFT) has emerged as a core primitive for phase estimation, period finding, and quantum signal processing \citep{shor1999polynomial,zhouquant2017}. Quantum signal processing is a rapidly growing area of research, and several quantum analogues of classical transforms, such as the wavelet and cosine transforms, have been proposed \citep{pang2019signal,yin2021quantum,motlagh2024generalized}. This motivates the question of whether the phase-shifting behavior of the classical Hilbert transform can be lifted into the quantum setting and used as a primitive for quantum steganography. However, such a construction is not obtained by directly copying the classical Hilbert-transform mask. Classical Hilbert-transform and analytic-signal operations may suppress or amplify selected frequency components, while a coherent quantum operation must preserve norm and remain unitary.

This study presents a new QHT formulation that maps a quantum state into the Fourier basis, applies a phase transformation that depends on frequency and sign of the Fourier frequencies, and then returns the state to the computational basis, and finally applies the inverse QFT. The suggested QHT maintains the DC (zero-frequency component) and Nyquist (highest discrete-frequency component) modes in contrast to the classical transform, ensuring that the entire operation stays unitary. Building on this construction, we propose a QHT-based steganography protocol. This protocol hides a classical binary message by applying small QHT-domain phase perturbations to legitimate quantum cover states. A hidden bit is represented by the sign of a small unitary perturbation: one sign corresponds to a positive QHT-domain displacement, and the other corresponds to a negative displacement. Bob extracts the hidden information by distinguishing between these two possible perturbations using a binary quantum state discrimination procedure. To improve reliability, the protocol includes block-level aggregation and a classical error-correcting layer. To improve covertness, an optional whitening sequence randomizes the embedded signs and suppresses persistent first-order statistical bias.

\noindent
The purpose of this work is, therefore, to introduce two things:

\begin{enumerate}
    \item We define a finite-dimensional unitary quantum analogue of the Hilbert transform.
    \item We show how this operator can act as a coherent phase-domain primitive for quantum steganography.
\end{enumerate}

The resulting protocol connects a classical phase-based information-hiding idea with quantum state manipulation, while making the embedding operation physically valid as a unitary transformation. The remainder of this paper is organized as follows. Section~\ref{HT} reviews the continuous and discrete Hilbert transforms and introduces the quantum Fourier transform. Section~\ref{QHT} defines the proposed finite-dimensional QHT as a unitary Hilbert-phase operator and relates it to a fractional QHT-domain generator. Section~\ref{QSP} presents the QHT-based quantum steganography protocol, including the cover-state model, hidden-message encoding, whitening, embedding, and decoding. Section~\ref{Sec:Results} includes the numerical verification of the working of the quantum-steganography protocol defined in this work. Section~\ref{Conclusions} concludes the paper and discusses the relevance of the proposed model to covert-channel analysis in quantum-augmented networks.

\subsection{Discrete Hilbert Transform}
\label{HT}

The Discrete Hilbert transform (DHT) of a real-valued sequence $x[n]$ can be defined in the frequency domain using the Discrete Fourier transform (DFT). The DHT is obtained by multiplying the DFT coefficients by a phase-only Hilbert multiplier. This is distinct from analytic-signal construction, which zeros the negative-frequency components and doubles the positive-frequency components. Let $x[n]$ be a length-$N$ sequence with DFT,

\begin{equation}
X[k]
=
\sum_{n=0}^{N-1}
x[n]e^{-2\pi \mathrm{i}kn/N}.
\end{equation}

The DHT is obtained by multiplying each Fourier coefficient by the discrete Hilbert-transform phase multiplier

\begin{equation}
m_{\mathcal{H}}[k] =
\begin{cases}
0, & k=0,\\
-\mathrm{i}, & 1\leq k\leq \frac{N}{2}-1,\\
0, & k=\frac{N}{2} \text{ for even } N,\\
+\mathrm{i}, & \frac{N}{2}+1\leq k\leq N-1.
\end{cases}
\label{eq:dht-phase-mask}
\end{equation}

Then,

\begin{equation}
\widehat{X}[k]
=
m_{\mathcal{H}}[k]X[k],
\end{equation}

and the discrete Hilbert transform is

\begin{equation}
\widehat{x}[n]
=
\frac{1}{N}
\sum_{k=0}^{N-1}
\widehat{X}[k]e^{2\pi \mathrm{i}kn/N}.
\end{equation}

The analytic signal is related to, but distinct from, the DHT. It is constructed as

\begin{equation}
x_{\rm an}[n]
=
x[n]+\mathrm{i}\widehat{x}[n].
\end{equation}

Equivalently, it can be formed using the analytic-signal mask

\begin{equation}
H_{\rm an}[k] =
\begin{cases}
1, & k=0,\\
2, & 1\leq k\leq \frac{N}{2}-1,\\
1, & k=\frac{N}{2} \text{ for even } N,\\
0, & \frac{N}{2}+1\leq k\leq N-1.
\end{cases}
\end{equation}

This analytic-signal mask is not unitary because it contains entries with magnitude $0$ and $2$. Therefore, it cannot be directly promoted to a closed-system quantum operation. The QHT introduced in Section~\ref{QHT} instead uses a phase-only multiplier with entries of unit magnitude.

\subsection{Quantum Fourier Transform}

Discrete Fourier transform (DFT) is defined to take a unit vector,
$\hat{x}=(x_0,x_1,\ldots,x_{N-1})\in\mathbb{C}^N$, and produce another unit vector,
$\hat{y}=(y_0,y_1,\ldots,y_{N-1})$. The output component $\hat{y}_k$ can be written as \citep{zhouquant2017}

\begin{equation}
y_k
=
\frac{1}{\sqrt{N}}
\sum_{j=0}^{N-1}
e^{2\pi \mathrm{i}jk/N}x_j,
\qquad
k=0,1,\ldots,N-1.
\label{dft}
\end{equation}

The Quantum Fourier Transform (QFT) operates similarly to the discrete Fourier transform but in amplitudes. Mathematically,

\begin{equation}
\sum_{j=0}^{N-1}x_j\ket{j}
\longrightarrow
\sum_{k=0}^{N-1}y_k\ket{k}.
\label{QFT}
\end{equation}

From Equation~\eqref{QFT}, we can see that the QFT takes a quantum state from the computational basis to the Fourier basis. The state information is encoded in the Fourier coefficients:

\begin{equation}
\operatorname{QFT}(\ket{\psi})
=
\frac{1}{\sqrt{N}}
\sum_{j=0}^{N-1}
\left(
\sum_{k=0}^{N-1}
x_k e^{2\pi \mathrm{i}jk/N}
\right)
\ket{j},
\label{qft-fourier}
\end{equation}

where

\begin{equation}
\beta_j
=
\sum_{k=0}^{N-1}
x_k e^{2\pi \mathrm{i}jk/N}
\end{equation}

are the Fourier coefficients. We will use the phase information extracted from these Fourier coefficients in the next section when we define the quantum Hilbert transform.

\section{Proposed Quantum Hilbert Transform}
\label{QHT}

One of the two key contributions of this work is to develop a quantum analogue of the classical Hilbert Transform. Classically, the Hilbert transform is most naturally understood in the Fourier domain, where it applies a frequency-sign-dependent phase shift. Positive frequency components are multiplied by $-\mathrm{i}$, while negative frequency components are multiplied by $+\mathrm{i}$, where $\mathrm{i}=\sqrt{-1}$. The zero-frequency component is treated separately. Therefore, the Hilbert transform does not primarily act by changing the magnitude of a signal, but by changing the relative phase of its frequency components.

In the classical setting, the zero-frequency component may be mapped to zero. However, such an operation is not unitary and therefore cannot directly represent the evolution of a closed quantum system. For this reason, we define the quantum Hilbert transform as a finite-dimensional unitary phase operator that preserves the exceptional Fourier modes while applying the Hilbert-transform phase convention to the positive and negative Fourier modes.

Let $N=2^n$ for an $n$-qubit system, with $n\geq2$. Let

\begin{equation}
\left\{
\ket{0},
\ket{1},
\ldots,
\ket{N-1}
\right\}
\end{equation}

denote the computational basis. Thus, an arbitrary quantum state can be written as

\begin{equation}
\ket{\psi}
=
\sum_{j=0}^{N-1}
x_j\ket{j},
\qquad
\sum_{j=0}^{N-1}|x_j|^2=1.
\label{psi}
\end{equation}

Let $F_N$ denote the $N$-dimensional quantum Fourier transform, defined by

\begin{equation}
F_N\ket{j}
=
\frac{1}{\sqrt{N}}
\sum_{k=0}^{N-1}
e^{2\pi \mathrm{i}jk/N}\ket{k}.
\end{equation}

Applying the QFT to $\ket{\psi}$ from Equation~\eqref{psi} gives

\begin{equation}
F_N\ket{\psi}
=
\sum_{k=0}^{N-1}
X_k\ket{k},
\end{equation}

where

\begin{equation}
X_k
=
\frac{1}{\sqrt{N}}
\sum_{j=0}^{N-1}
x_j e^{2\pi \mathrm{i}jk/N}
\end{equation}

are the Fourier-basis amplitudes of the state.

We can now define a diagonal Hilbert phase operator $D_{\mathcal H}$ in the Fourier basis. For even $N$, let

\begin{equation}
D_{\mathcal H}\ket{k}
=
h_k\ket{k},
\end{equation}

where

\begin{equation}
h_k =
\begin{cases}
1, & k=0,\\
-\mathrm{i}, & 1\leq k\leq \frac{N}{2}-1,\\
1, & k=\frac{N}{2},\\
+\mathrm{i}, & \frac{N}{2}+1\leq k\leq N-1.
\end{cases}
\end{equation}

Here, $k=0$ represents the DC mode,
$1\leq k\leq N/2-1$ represents the positive Fourier modes,
$k=N/2$ is the Nyquist mode for even $N$, and
$N/2+1\leq k\leq N-1$ represents the negative Fourier modes.
The positive modes therefore receive a $-\pi/2$ phase shift, while the negative modes receive a $+\pi/2$ phase shift. The DC and Nyquist modes are preserved so that the operation remains unitary.

\begin{definition}[Quantum Hilbert Transform]
The quantum Hilbert transform is defined as

\begin{equation}
U_{\mathcal H}
=
F_N^\dagger D_{\mathcal H}F_N.
\end{equation}

Thus, the QHT maps the input state to the Fourier basis, applies the Hilbert-transform phase rule to the Fourier modes, and finally maps the state back to the computational basis.
\end{definition}

The action of the QHT on an input state can be written as

\begin{equation}
U_{\mathcal H}\ket{\psi}
=
F_N^\dagger D_{\mathcal H}F_N\ket{\psi}.
\end{equation}

Equivalently,

\begin{equation}
F_N\ket{\psi}
=
\sum_{k=0}^{N-1}
X_k\ket{k},
\end{equation}

\begin{equation}
D_{\mathcal H}F_N\ket{\psi}
=
\sum_{k=0}^{N-1}
h_kX_k\ket{k},
\end{equation}

and therefore,

\begin{equation}
U_{\mathcal H}\ket{\psi}
=
F_N^\dagger
\left(
\sum_{k=0}^{N-1}
h_kX_k\ket{k}
\right).
\end{equation}

The above concludes the formation of the quantum Hilbert transform (QHT), which is a unitary finite-dimensional analogue of the classical Hilbert transform. In the classical Hilbert transform, the zero-frequency component is annihilated. In the quantum setting, mapping a basis component to zero would violate the unitary property. Therefore, the QHT preserves the DC component. Similarly, for even $N$, the Nyquist mode is also preserved.

\subsection{Fractional QHT Generator}
\label{sec:fractional-qht}

The QHT can also be described through a skew-Hermitian generator. For even $N$, define the diagonal operator $A_{\mathcal H}$ in the Fourier basis by

\begin{equation}
A_{\mathcal H}\ket{k}
=
a_k\ket{k},
\end{equation}

where

\begin{equation}
a_k =
\begin{cases}
0, & k=0,\\
-\mathrm{i}, & 1\leq k\leq N/2-1,\\
0, & k=N/2,\\
+\mathrm{i}, & N/2+1\leq k\leq N-1.
\end{cases}
\label{eq:qht-generator-mask-main}
\end{equation}

Since all nonzero entries of $A_{\mathcal H}$ are purely imaginary and occur as Hilbert-transform phase directions, we have

\begin{equation}
A_{\mathcal H}^{\dagger}
=
-A_{\mathcal H}.
\end{equation}

Therefore,

\begin{equation}
G_{\mathcal H}
=
F_N^\dagger A_{\mathcal H}F_N
\end{equation}

is also skew-Hermitian. The Hilbert-phase operator $D_{\mathcal H}$ satisfies

\begin{equation}
D_{\mathcal H}
=
\exp\left(
\frac{\pi}{2}A_{\mathcal H}
\right).
\end{equation}

This follows because the entries $a_k=-\mathrm{i}$ on the positive Fourier modes generate the phase factor $e^{-\mathrm{i}\pi/2}=-\mathrm{i}$, while the entries $a_k=+\mathrm{i}$ on the negative Fourier modes generate the phase factor $e^{+\mathrm{i}\pi/2}=+\mathrm{i}$. The entries $a_k=0$ on the DC and Nyquist modes generate the identity phase factor. Thus,

\begin{equation}
U_{\mathcal H}
=
F_N^\dagger D_{\mathcal H}F_N
=
\exp\left(
\frac{\pi}{2}G_{\mathcal H}
\right).
\end{equation}

This representation shows that the QHT is generated by $G_{\mathcal H}$ and that smaller signed powers of this generator can be interpreted as fractional QHT-domain displacements. We define the Hermitian observable

\begin{equation}
K_{\mathcal H}
=
\mathrm{i}G_{\mathcal H}.
\end{equation}

Since $G_{\mathcal H}$ is skew-Hermitian, $K_{\mathcal H}$ is Hermitian. The steganographic embedding unitary can then be written as

\begin{equation}
V_{\alpha}(\epsilon)
=
\exp\left(
\alpha\epsilon G_{\mathcal H}
\right)
=
\exp\left(
-\mathrm{i}\alpha\epsilon K_{\mathcal H}
\right).
\end{equation}

The proposed QHT is not identical to the classical Hilbert transform on the full Fourier space. In particular, the DC and Nyquist modes are preserved rather than annihilated. Consequently, $U_{\mathcal H}^{2}$ acts as $-I$ only on the nonzero non-Nyquist Fourier subspace, while the exceptional DC and Nyquist modes are left unchanged. This distinction is necessary to maintain the unitary property and is the main difference between the classical Hilbert transform and the finite-dimensional quantum Hilbert-phase operator developed here.

\section{Quantum Steganography Protocol}
\label{QSP}

In this section, we discuss the formulation of a steganography protocol based on the quantum Hilbert transform (QHT) in the previous section. The main idea is to hide information in weak QHT-domain phase perturbations of legitimate quantum states. Unlike classical steganography, where secret information may be hidden in pixels, audio samples, or frequency coefficients, the proposed protocol hides information in the phase structure of quantum states.

\subsection{Cover-State Model}
\label{Sec:coverstate}

Let Alice and Bob communicate using a sequence of legitimate quantum cover states

\begin{equation}
\ket{c_t},
\qquad
t=1,2,\ldots,T,
\end{equation}

where each $\ket{c_t}$ belongs to an $N$-dimensional Hilbert space with $N=2^n$. This is monitored by a passive warden, Wendy. Wendy's goal is to analyze the transmitted states to detect the presence of a steganographic payload without disrupting the legitimate network traffic. Each cover state can be written as

\begin{equation}
\ket{c_t}
=
\sum_{j=0}^{N-1}
c_{t,j}\ket{j},
\qquad
\sum_{j=0}^{N-1}|c_{t,j}|^2=1.
\end{equation}

For steganographic embedding, we use the QHT generator $G_{\mathcal H}$ defined in Section~\ref{sec:fractional-qht}. Equivalently, using the Hermitian observable

\begin{equation}
K_{\mathcal H}
=
\mathrm{i}G_{\mathcal H},
\end{equation}

the embedding operation can be interpreted as a signed phase displacement generated by $K_{\mathcal H}$. Since $G_{\mathcal H}$ is skew-Hermitian, $\exp(\theta G_{\mathcal H})$ is unitary for any real $\theta$.

\subsection{Hidden Message Encoding}
\label{Sec:hiddenmess}

Let the hidden message be a binary string,

\begin{equation}
M=m_1m_2\ldots m_\ell,
\qquad
m_i\in\{0,1\}.
\end{equation}

Since practical quantum hardware is noisy, Alice first encodes the hidden message using a classical binary error-correcting code to improve reliability. Let

\begin{equation}
\operatorname{Enc}:
\{0,1\}^{\ell}
\longrightarrow
\{0,1\}^{\ell'}
\end{equation}

denote the encoder. The encoded hidden sequence is

\begin{equation}
\mathbf{b}
=
\operatorname{Enc}(M)
=
(b_1,b_2,\ldots,b_{\ell'}),
\qquad
b_t\in\{0,1\}.
\end{equation}

Here, $\ell'$ denotes the length of the encoded hidden sequence after error-correcting encoding. This coding layer is classical because the hidden payload is a classical bit string.

In the basic one-carrier-per-encoded-bit version of the protocol, the $t$-th encoded bit $b_t$ is embedded into the corresponding cover state $\ket{c_t}$, so that $T=\ell'$. If block spreading is used, each encoded bit may instead be embedded across multiple carrier states, as described in the decoding section. In that case, for a fixed block length $L$, the total number of transmitted cover states is $T=L\ell'$.

If a shared stego-key is available, Alice and Bob may use it to generate a pseudo-random whitening sequence

\begin{equation}
r_1,r_2,\ldots,r_{\ell'},
\qquad
r_t\in\{0,1\}.
\end{equation}

The encoded bit is then randomized as

\begin{equation}
\widetilde{b}_t
=
b_t\oplus r_t.
\end{equation}

For a keyless version of the protocol, one simply sets

\begin{equation}
\widetilde{b}_t=b_t.
\end{equation}

The randomized bit is mapped to a sign value,

\begin{equation}
\alpha_t
=
(-1)^{\widetilde{b}_t}.
\end{equation}

Thus,

\begin{equation}
\alpha_t =
\begin{cases}
+1, & \widetilde{b}_t=0,\\
-1, & \widetilde{b}_t=1.
\end{cases}
\end{equation}

We now use the QHT to perform embedding. Let $\epsilon>0$ be a small embedding strength. For each cover state $\ket{c_t}$, Alice applies the QHT-domain steganographic unitary

\begin{equation}
V_{\alpha_t}(\epsilon)
=
\exp\left(
\alpha_t\epsilon G_{\mathcal H}
\right).
\end{equation}

The transmitted stego state is

\begin{equation}
\ket{s_t}
=
V_{\alpha_t}(\epsilon)\ket{c_t}.
\label{eq:stego-state}
\end{equation}

In the Fourier basis, let

\begin{equation}
F_N\ket{c_t}
=
\sum_{k=0}^{N-1}
C_{t,k}\ket{k}.
\end{equation}

The embedding operation applies the transformation

\begin{equation}
C_{t,k}
\longrightarrow
e^{\alpha_t\epsilon a_k}C_{t,k}.
\end{equation}

Using Equation~\eqref{eq:qht-generator-mask-main}, the DC and Nyquist modes remain unchanged, while the positive and negative Fourier modes acquire opposite-signed phase shifts:

\begin{equation}
C_{t,k}
\longrightarrow
e^{-\mathrm{i}\alpha_t\epsilon}C_{t,k},
\qquad
1\leq k\leq\frac{N}{2}-1,
\end{equation}

and

\begin{equation}
C_{t,k}
\longrightarrow
e^{+\mathrm{i}\alpha_t\epsilon}C_{t,k},
\qquad
\frac{N}{2}+1\leq k\leq N-1.
\end{equation}

Thus, the hidden bit is encoded as a weak signed QHT-domain phase displacement. If $\widetilde{b}_t=0$, then $\alpha_t=+1$; if $\widetilde{b}_t=1$, then $\alpha_t=-1$.

\subsection{Decoding and Extraction}
\label{Sec:decoding}

Bob receives the sequence of stego states

\begin{equation}
\ket{s_1},
\ket{s_2},
\ldots,
\ket{s_T}.
\end{equation}

We assume that Bob has access to the expected cover-state model,

\begin{equation}
\left\{
\ket{c_t}
\right\}_{t=1}^{T},
\end{equation}

as well as the embedding strength $\epsilon$ and the QHT-domain embedding operator. For each received state, Bob must determine whether Alice applied the positive or negative embedding operation. For the $t$-th cover state, we define the two possible stego states as

\begin{equation}
\ket{\psi_{+}^{(t)}}
=
V_{+}(\epsilon)\ket{c_t},
\qquad
\ket{\psi_{-}^{(t)}}
=
V_{-}(\epsilon)\ket{c_t}.
\end{equation}

The extraction problem for each carrier state can therefore be written as a binary quantum state-discrimination problem:

\begin{equation}
H_0:
\ket{s_t}
=
\ket{\psi_{+}^{(t)}},
\end{equation}

and

\begin{equation}
H_1:
\ket{s_t}
=
\ket{\psi_{-}^{(t)}}.
\end{equation}

The hypothesis $H_0$ corresponds to a positive QHT-domain perturbation, while $H_1$ corresponds to a negative QHT-domain perturbation. Since the hidden bit is embedded in the sign of this perturbation, deciding between $H_0$ and $H_1$ allows Bob to recover the hidden bit. Equivalently, Bob estimates the sign of the weak phase displacement introduced in the QHT domain. For small $\epsilon$, the received state differs from the cover state mainly through the first-order perturbation generated by $G_{\mathcal H}$. Thus, the decoding problem can be interpreted as deciding whether the perturbation was applied in the $+G_{\mathcal H}$ direction or the $-G_{\mathcal H}$ direction.

The ideal minimum-error version of this binary decision problem is given by the Helstrom measurement. In the main protocol, the binary test is treated abstractly as a decoding step that produces a noisy sign estimate for each carrier state. A single carrier state may not provide enough information for reliable retrieval when $\epsilon$ is small. Therefore, we can spread the hidden information over a block of $L$ carrier states.

Let $B_i$ denote the block of carrier states associated with the $i$-th encoded hidden bit. For each $t\in B_i$, Bob obtains a noisy sign estimate,

\begin{equation}
Y_t\in\{+1,-1\}.
\end{equation}

Here, $Y_t=+1$ means Bob prefers $H_0$, and $Y_t=-1$ means Bob prefers $H_1$. Bob then aggregates the sign estimates over the block:

\begin{equation}
S_i
=
\sum_{t\in B_i}Y_t.
\end{equation}

The encoded bit is estimated by majority decision:

\begin{equation}
\widehat{\widetilde{b}}_i =
\begin{cases}
0, & S_i\geq0,\\
1, & S_i<0.
\end{cases}
\end{equation}

Thus, if the aggregate sign is nonnegative, Bob decides that the positive embedding sign was used and decodes the embedded bit as $0$. If the aggregate sign is negative, Bob decides that the negative embedding sign was used and decodes the embedded bit as $1$.

If a whitening sequence was used, Bob removes the whitening after all block-level decisions have been finalized. If $r_i$ is the whitening bit associated with the $i$-th encoded position, Bob computes

\begin{equation}
\widehat{b}_i
=
\widehat{\widetilde{b}}_i\oplus r_i.
\end{equation}

The encoded bit sequence is then passed through the error-correcting decoder:

\begin{equation}
\widehat{M}
=
\operatorname{Dec}
\left(
\widehat{b}_1,
\widehat{b}_2,
\ldots,
\widehat{b}_{\ell'}
\right),
\end{equation}

where $\ell'$ denotes the length of the encoded hidden sequence after error-correcting encoding.

The proposed decoding model assumes that the receiver has access to the expected cover-state description for each transmission interval. Therefore, the protocol is not intended for arbitrary unknown quantum states. Instead, it applies to structured quantum-augmented network traffic in which routine protocol states are generated from a known schedule, public metadata, or a pre-established cover-state codebook. For example, in a quantum-augmented microgrid, periodic synchronization, heartbeat, controller identification, or status-confirmation frames may be mapped to predetermined quantum cover states. In such a scenario, where nodes can reconstruct quantum payloads as part of routine traffic, the proposed protocol can be used to study how sensitive information might be hidden through weak QHT-domain perturbations whose detectability depends on the embedding strength, whitening, channel noise, and the number of carrier states observed by the warden.

\subsection{Working Example}
\label{Sec:example}

To illustrate an application of the protocol, we use the example of a networked microgrid in which two nodes are secretly compromised and attempt to leak sensitive control information through periodic control commands.

Assume that two local controllers are compromised: for example, the controller in Microgrid~$1$ and the controller in Microgrid~$3$. The compromised controller in Microgrid~$1$ embeds a low-rate hidden payload into its otherwise legitimate routine reports, while the compromised controller in Microgrid~$3$ acts as the covert receiver and decodes the hidden information. The operator or network monitor continues to observe traffic that appears to be ordinary controller communication because the visible communication pattern and message timing remain unchanged.

Thus, the cover states

\begin{equation}
S_t,S_{t+1},S_{t+2},\ldots
\end{equation}

represent the apparent routine packets, while the hidden payload is carried through weakly structured perturbations embedded within them. Based on the QHT-steganography protocol, the hidden information is encoded as a subtle QHT-domain perturbation of the transmitted cover state.

Let $\ket{c_t}$ denote the legitimate quantum state corresponding to a routine controller packet at time $t$. The compromised node prepares the transmitted stego state as

\begin{equation}
\ket{s_t}
=
\exp\left(
\alpha_t\epsilon G_{\mathcal H}
\right)
\ket{c_t},
\end{equation}

where $G_{\mathcal H}$ is the QHT generator, $\epsilon$ is a small embedding strength, and $\alpha_t\in\{+1,-1\}$ encodes the hidden bit. Since $\epsilon$ is small, the stego state remains close to the original cover state and therefore preserves the appearance of normal communication. This shows how two compromised parties can use regular quantum communication exchange to exploit and hide secret messages that might otherwise be blocked by the network administrator. Figure~\ref{fig:stegomicrogrid} shows a schematic representation of one possible example of the proposed steganography protocol.

\begin{figure}[!htbp]
    \centering
    \safeincludegraphics[width=\linewidth]
    {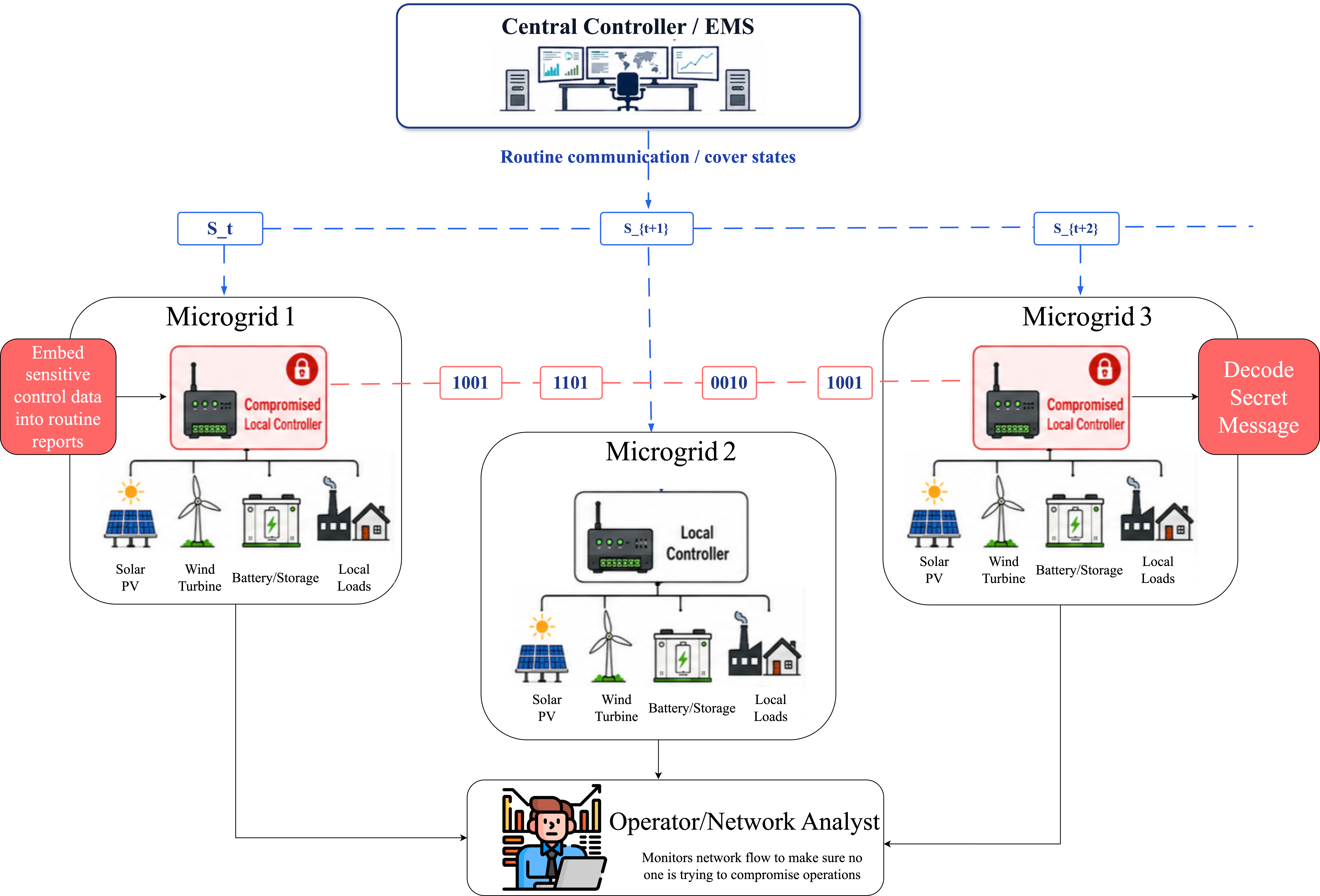}
    \caption{A schematic diagram showing how the QHT-based steganography protocol can exploit quantum communication to embed hidden information across two compromised parties in a microgrid. A synchronized attack between two parties can be conducted while the network analyst remains unable to infer the presence of hidden information among the routine control commands transmitted as quantum states.}
    \label{fig:stegomicrogrid}
\end{figure}

\section{Results}
\label{Sec:Results}

In this section, we present experimental results to cross-validate the operation of the protocol. The goal is to study the tradeoff among Bob's decoding reliability, the disturbance introduced into the cover state, and the detectability of the embedding by a warden. We evaluate the effects of the embedding strength $\epsilon$, spreading length $L$, whitening, message bias, and depolarizing noise.

\subsection{Simulation Setup}
\label{Sec:setup}

All simulations were implemented on the full $N$-dimensional Hilbert space. Unless otherwise stated, the experiments use $n=4$ qubits, so that $N=16$. The QHT operator, generator, density matrices, Helstrom measurements, fidelities, and trace distances were computed directly from their matrix definitions.

For each cover state $\ket{c_t}$, Alice applies the signed QHT-domain embedding unitary

\begin{equation}
V_{\alpha_t}(\epsilon)
=
e^{\alpha_t\epsilon G_{\mathcal H}}
=
e^{-\mathrm{i}\alpha_t\epsilon K_{\mathcal H}},
\end{equation}

where $\alpha_t\in\{+1,-1\}$ is determined by the hidden bit after optional whitening. The default cover ensemble consists of Haar-random pure states. For the cover-structure study, we also use $k$-sparse computational-basis cover states.

Bob is assumed to know the expected cover state and applies the optimal binary Helstrom measurement to distinguish the two candidate stego states,
$V_{+}(\epsilon)\ket{c_t}$ and $V_{-}(\epsilon)\ket{c_t}$. When block spreading is used, each encoded bit is embedded over $L$ independently drawn carrier states, and Bob uses majority aggregation over the $L$ per-carrier decisions.

Reliability is measured using Bob's bit error rate. Cover disturbance is measured using the fidelity loss,
$1-F(\ket{c},\ket{s})$. Warden detectability is measured using the trace distance between the stego ensemble and the corresponding cover ensemble.

For noisy-channel experiments, a global depolarizing channel

\begin{equation}
\mathcal{N}_p(\rho)
=
(1-p)\rho+\frac{p}{N}I
\end{equation}

is applied to both cover and stego states before computing Bob's decoding performance and the warden's trace-distance distinguishability.

The Haar-random cover ensemble is used as a generic mathematical stress test for the QHT embedding operator rather than as a literal model of microgrid traffic. Its purpose is to evaluate the average behavior of the signed QHT perturbation over a broad set of quantum states and to verify the expected reliability, disturbance, and trace-distance trends.

\subsection{Numerical Results}
\label{Sec:Numerical}

In this section, we present numerical results to verify the theoretical foundation of the QHT-based steganography protocol empirically.

\begin{figure}[!htbp]
    \centering
    \safeincludegraphics[width=0.75\linewidth]
    {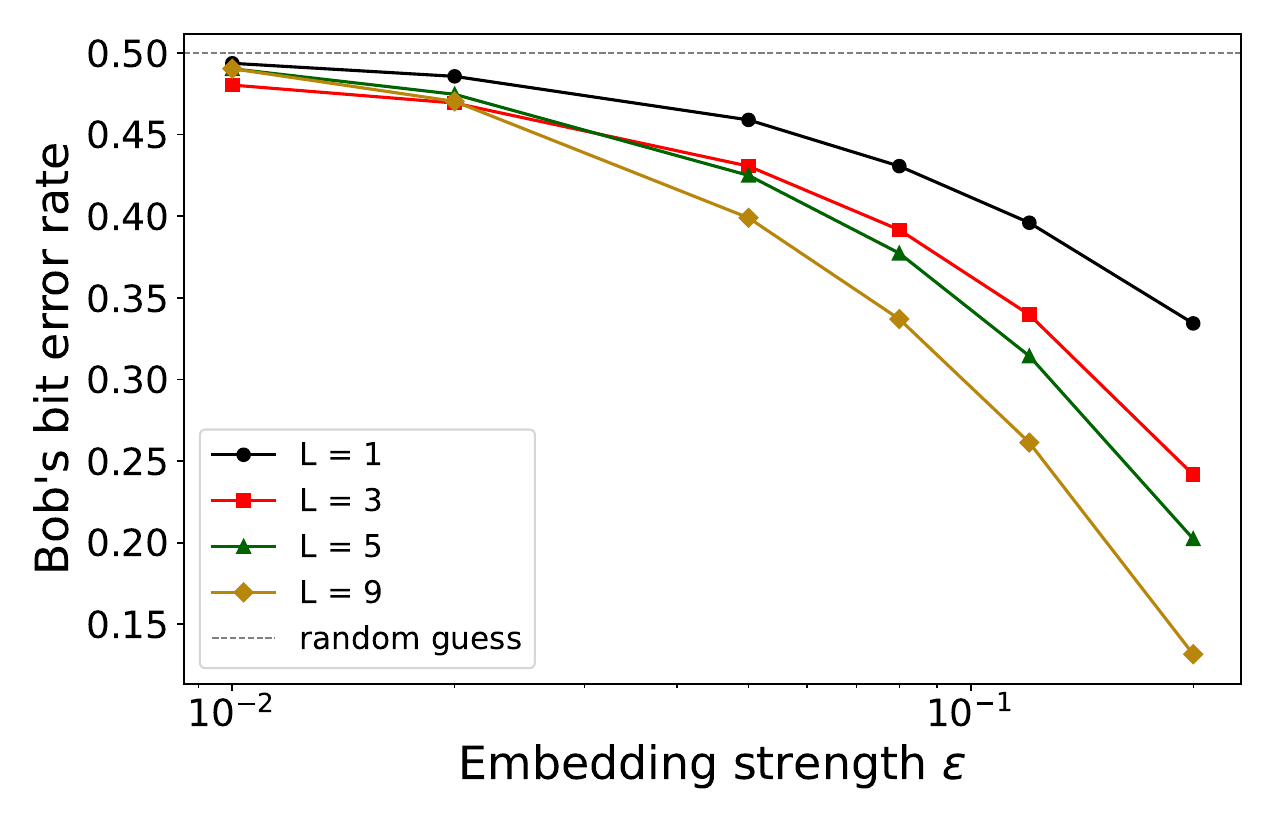}
    \caption{Bob's bit error rate with respect to the embedding strength $\epsilon$ for different spreading lengths $L=1,3,5,9$. For small $\epsilon$, the two signed QHT-domain perturbations are difficult to distinguish, and the error rate remains close to random guessing. As $\epsilon$ increases, the perturbations become more distinguishable and Bob's error rate decreases. Larger spreading lengths further improve reliability through block-majority aggregation.}
    \label{fig:bobber}
\end{figure}

Figure~\ref{fig:bobber} shows the scaling of Bob's detection capability with respect to embedding strength. As expected, BER decreases monotonically with $\epsilon$, and at any fixed $\epsilon$, larger $L$ yields a lower BER. Each message bit is carried by $L$ independent carriers, each on a freshly drawn cover state, and Bob's per-bit decision is the majority vote of the $L$ Helstrom outcomes. For small $\epsilon$, this concentrates the BER approximately as

\begin{equation}
\exp\left(
-2Lc^2\epsilon^2
\right).
\end{equation}

\begin{figure}[!htbp]
    \centering

    \begin{subfigure}[t]{0.49\linewidth}
        \centering
        \safeincludegraphics[width=\linewidth]
        {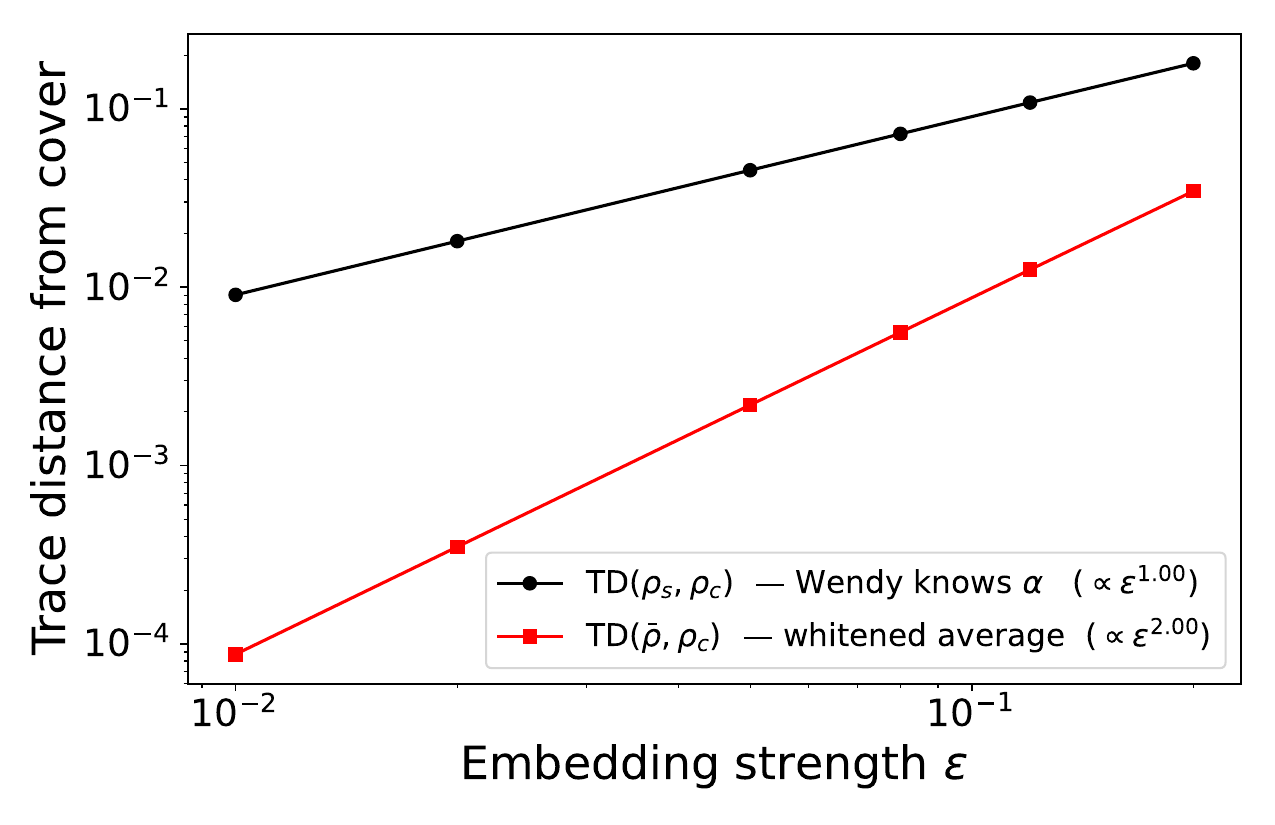}
        \caption{Trace-distance leakage between cover and stego ensembles as a function of embedding strength $\epsilon$. When Wendy knows the embedding sign $\alpha$, the deviation from the cover state scales approximately as $\mathcal{O}(\epsilon)$. Under whitening, the first-order leakage cancels in the averaged stego ensemble, reducing the dominant trace-distance signature to approximately $\mathcal{O}(\epsilon^2)$.}
        \label{fig:covertness}
    \end{subfigure}
    \hfill
    \begin{subfigure}[t]{0.49\linewidth}
        \centering
        \safeincludegraphics[width=\linewidth]
        {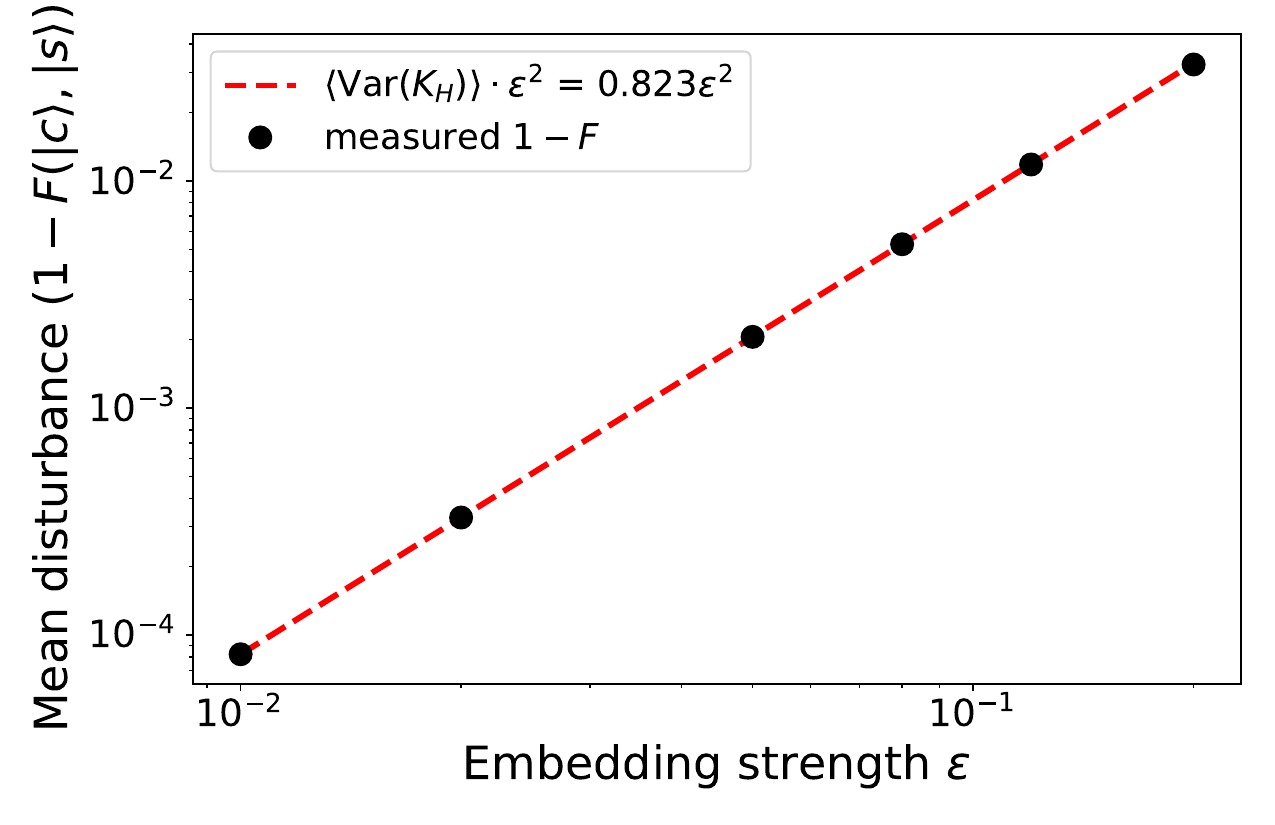}
        \caption{Fidelity disturbance induced by QHT-domain embedding. The measured mean disturbance $1-F(\ket{c},\ket{s})$ closely follows the predicted perturbative scaling $\operatorname{Var}_c(K_{\mathcal H})\epsilon^2$, confirming that the dominant cover-state disturbance is second order in $\epsilon$.}
        \label{fig:fidelityscaling}
    \end{subfigure}

    \caption{Detectability and disturbance footprints of QHT-domain embedding. Panel (a) shows the trace-distance signature observable by Wendy, with and without bit whitening. Panel (b) shows the fidelity disturbance that Alice introduces into the cover state during embedding. Together, the panels show that whitening can eliminate the embedding's first-order $\alpha$-linear footprint, leaving an $\mathcal{O}(\epsilon^2)$ trace-distance residue in the detectability and disturbance budgets.}
    \label{fig:covertness_disturbance}
\end{figure}

Figure~\ref{fig:covertness} plots the trace-distance footprint of a single embedded bit against the embedding strength $\epsilon$ on log--log axes for two warden models: a side-informed warden who knows Alice's bit, represented by the upper curve with slope $1$, and the operational case in which the bit has been XOR-whitened by a fresh pseudo-random key, represented by the lower curve with slope $2$.

The fitted slopes confirm that whitening cancels the first-order $\alpha$-linear leakage of $V_\alpha(\epsilon)$ and leaves only an $\mathcal{O}(\epsilon^2)$ single-carrier trace-distance residue. This result should be interpreted as a per-carrier detectability scaling rather than a full-message covertness guarantee. Since Bob's reliable decoding may also require spreading each encoded bit over $L$ carrier states, the warden's operational detection advantage must account for the accumulated evidence across those carriers. This limitation is important because increasing the spreading length improves Bob's reliability but also increases the number of carrier states available to Wendy for detection.

Figure~\ref{fig:fidelityscaling} represents the mean physical disturbance

\begin{equation}
\left\langle
1-F\left(
\ket{c},
V_\alpha(\epsilon)\ket{c}
\right)
\right\rangle
\end{equation}

that Alice introduces into a random cover state when embedding a single bit at strength $\epsilon$. The measured curve tracks the theoretical prediction with slope $2$ and a matching prefactor across three decades of $\epsilon$. This confirms both that the Fourier-diagonal implementation of $V_\alpha(\epsilon)$ is correct and that $\operatorname{Var}_c(K_{\mathcal H})$ is the scalar controlling the per-carrier disturbance budget. The $\mathcal{O}(\epsilon^4)$ corrections remain negligible throughout the operating range $\epsilon\in[10^{-2},0.2]$ used elsewhere in the manuscript.

\begin{figure}[!htbp]
    \centering
    \safeincludegraphics[width=\linewidth]
    {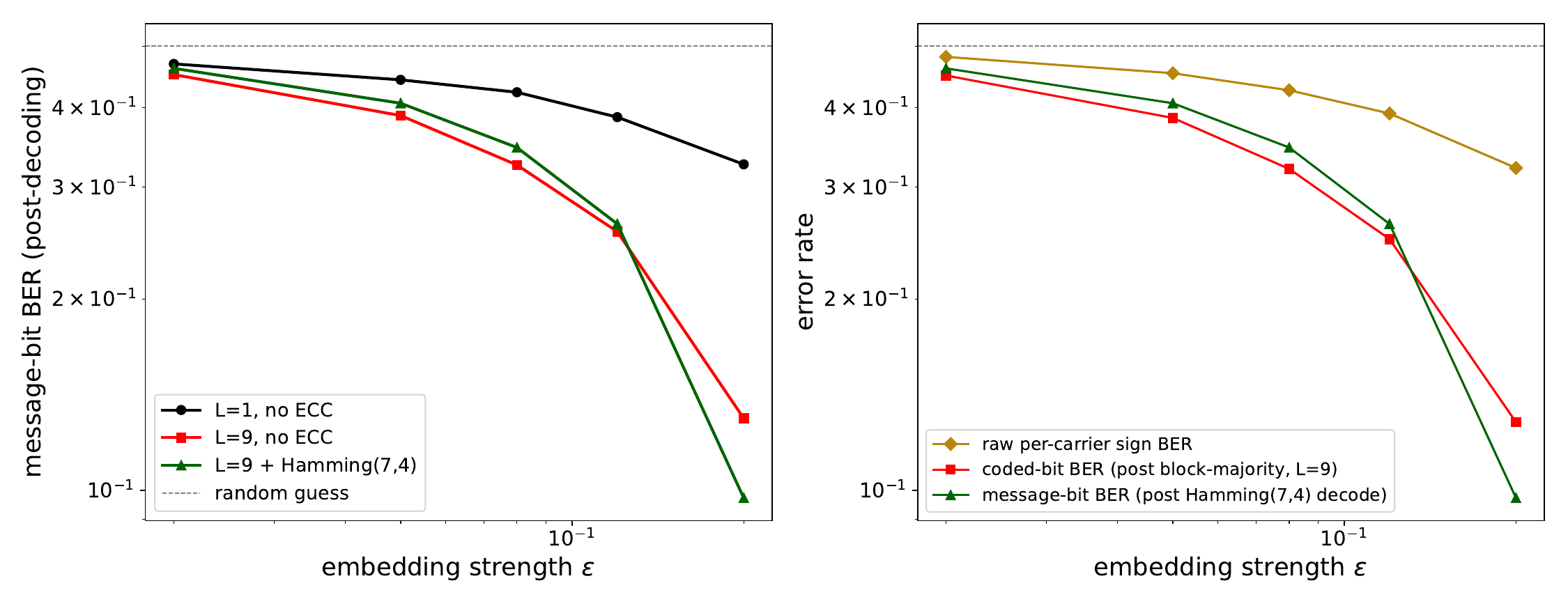}
    \caption{Post-decoding message-bit error results. The left panel shows the post-decoding message-bit error rate as a function of embedding strength $\epsilon$. The right panel shows the stage-by-stage decomposition of error rates for the $L=9$ configuration. The results demonstrate a coding threshold: although classical forward error correction significantly improves reliability when combined with quantum block spreading at robust signal levels, it introduces miscorrection penalties when the raw quantum noise floor exceeds the code's capacity.}
    \label{fig:errorcorrection}
\end{figure}

Figure~\ref{fig:errorcorrection} evaluates the classical error-correcting layer using a Hamming$(7,4)$ code. The left panel compares the post-decoding message-bit error rate for non-encoded transmission with $L=1$, block spreading alone with $L=9$, and the full hybrid scheme with both block spreading and Hamming coding. The right panel shows how the error changes across the $L=9$ pipeline, from raw Helstrom measurements to block-majority decoding and finally to the recovered message bits.

The results show that FEC has a clear threshold in high-noise quantum regimes. At sufficiently large embedding strength $\epsilon$, the Hamming code works effectively with block spreading and significantly reduces the final error rate. However, when $\epsilon$ is too small, the raw quantum error is too high for the classical code to correct reliably. Future work can investigate whether more modern error-correcting codes can mitigate this limitation.

\begin{figure}[!htbp]
    \centering
    \safeincludegraphics[width=\linewidth]
    {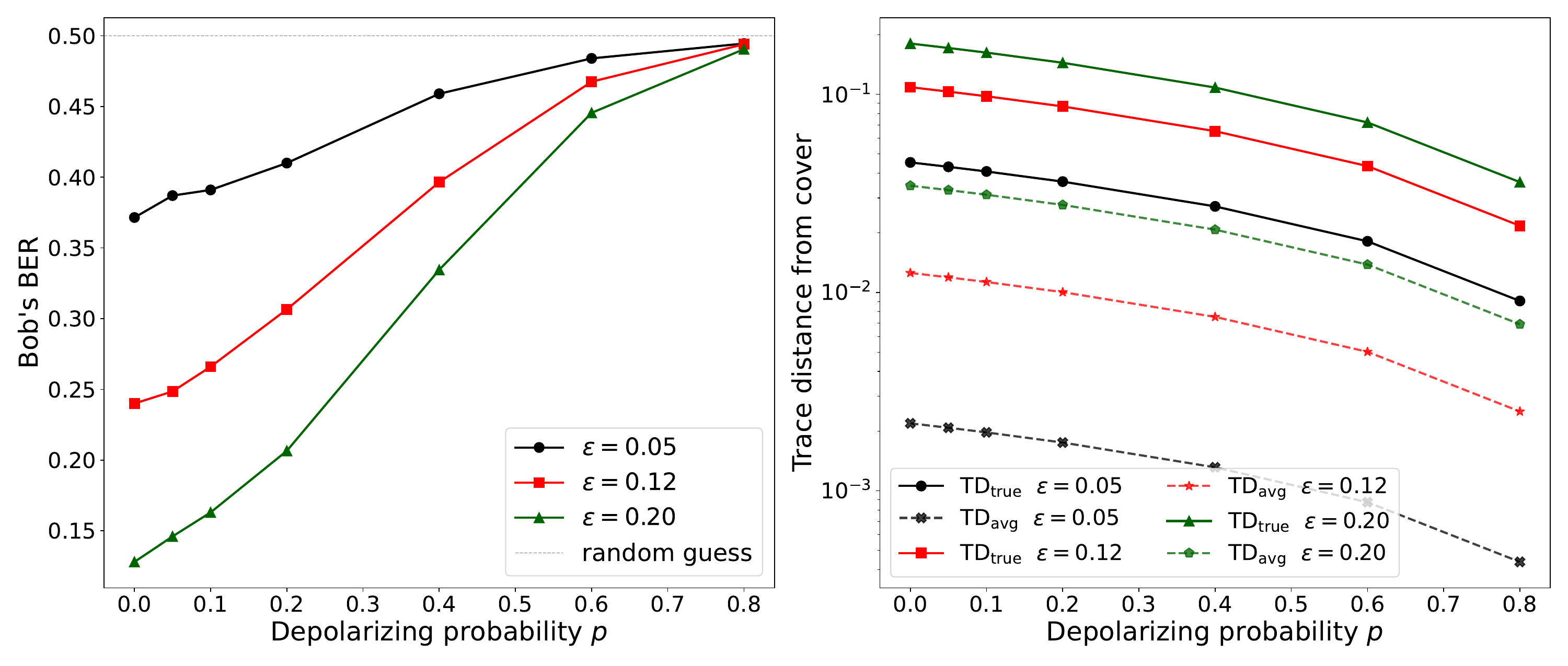}
    \caption{Effect of depolarizing noise on the reliability and detectability of the protocol. The left panel shows Bob's bit error rate for different embedding strengths $\epsilon$, while the right panel shows trace-distance leakage for known-sign and whitened-average stego ensembles. Increasing depolarizing noise degrades Bob's decoding reliability but also reduces Wendy's ability to distinguish stego states from cover states. This illustrates the tradeoff among embedding strength, recoverability, channel noise, and covertness.}
    \label{fig:noise}
\end{figure}

Figure~\ref{fig:noise} shows the effect of depolarizing noise on Bob's decoding reliability and on stego-cover distinguishability. In the left panel, Bob's BER increases as the depolarizing probability $p$ increases, approaching the random-guess limit of $0.5$. This behavior is expected because depolarizing noise progressively removes the phase-domain encoded information used to distinguish the positive and negative perturbations.

Larger embedding strengths remain more robust. For example, $\epsilon=0.20$ maintains a substantially lower BER than $\epsilon=0.05$ for small and moderate noise levels, but all curves eventually move toward random guessing as the channel becomes highly noisy.

The right panel shows the corresponding trace-distance behavior. Both $TD_{\mathrm{true}}$, which measures the distinguishability of a signed stego state from the cover state, and $TD_{\mathrm{avg}}$, which measures the distinguishability of the whitened average state from the cover state, decrease as $p$ increases. This reflects the contraction of distinguishability under depolarizing noise. However, this reduction in detectability is not a free security gain because the same noise also degrades Bob's ability to recover the hidden message. The gap between $TD_{\mathrm{true}}$ and $TD_{\mathrm{avg}}$ remains evident across the entire noise range, indicating that whitening continues to suppress first-order signed leakage under depolarizing noise.

\section{Discussion and Conclusions}
\label{Conclusions}

In this work, we introduced a finite-dimensional quantum Hilbert transform (QHT) as a unitary analogue of the classical Hilbert transform. The proposed construction uses the quantum Fourier transform to map a quantum state into the Fourier basis, applies a frequency-sign-dependent phase rule, and then maps the state back to the computational basis. Unlike the classical Hilbert transform, which may suppress the zero-frequency component, the proposed QHT preserves the DC and Nyquist modes so that the overall operation remains unitary. This allows the QHT to retain the phase-shifting structure of the classical Hilbert transform while remaining compatible with coherent quantum evolution.

Building on this operation, we proposed a quantum steganography protocol in which a classical hidden message is embedded into weakly signed phase perturbations of legitimate quantum cover states. The hidden bit is represented by the sign of a small QHT-domain unitary displacement. Bob recovers the embedded information by distinguishing between positive and negative perturbations using a binary quantum state-discrimination procedure, followed by block-level aggregation and classical error-correcting decoding.

Beyond steganography, the protocol can be used to develop possible attack vectors for quantum-augmented networks (QuANets). QuANets integrate quantum communication primitives with classical network infrastructure to develop more secure network architectures \citep{jha2024ml,jha2025towards,jha2025quantum}. Within such a setting, the steganography protocol could be adapted into an attack vector in which compromised nodes introduce weak QHT-domain perturbations to leak low-rate information while maintaining the appearance of normal protocol execution. The QHT-based embedding model therefore provides a controlled way to define and evaluate phase-domain covert-channel attacks.

This is particularly relevant for quantum-augmented cyber-physical systems such as microgrids \citep{jha2026novel}. In a quantum-augmented microgrid, quantum primitives may be used to strengthen communication among controllers, sensors, distributed energy resources, and supervisory nodes. However, if a trusted or semi-trusted node is compromised, the same quantum communication layer may create a new attack surface. The attacker may not need to break the quantum protocol directly. Instead, the attacker may exploit the legitimate quantum carrier by adding small structured perturbations that encode side information, operational state, timing information, or control-related metadata. Such a channel could remain below ordinary error-rate or fidelity thresholds while still being decodable by the intended covert receiver.

The QHT framework therefore provides a method for constructing and studying such attack vectors. The embedding strength $\epsilon$, spreading length $L$, whitening sequence, and cover-state structure can be treated as adversarial design parameters. By varying these parameters, one can study how much information can be hidden before the perturbation becomes detectable by a network monitor. This makes the proposed protocol useful both for demonstrating quantum steganography and for systematically evaluating covert leakage risks in quantum-augmented network architectures. One focus of future work will be to modify this protocol to construct a formal covert-channel attack model for QuANets.

Another future direction is to study the effects of channel-noise models other than depolarization on the QHT implementation and the quantum-steganography protocol. This would support the development of a hardware-level implementation framework that can be adapted to practical use cases. These directions would allow the proposed QHT-based protocol to serve as both a quantum steganographic primitive and a structured attack model for future quantum-augmented infrastructure.

\bibliographystyle{apalike}
\bibliography{main}

\begin{thebibliography}{}

\bibitem[Deng, 2006]{hilbertHiding2006}
Deng, L. (2006).
\newblock A new approach of data hiding within speech based on hash and {Hilbert} transform.
\newblock In {\em 2006 International Conference on Systems and Networks Communications ({ICSNC}'06)}, pages 13--13.

\bibitem[Feldman, 2008]{Feldman2008TheoreticalAA}
Feldman, M. (2008).
\newblock Theoretical analysis and comparison of the {Hilbert} transform decomposition methods.
\newblock {\em Mechanical Systems and Signal Processing}, 22:509--519.

\bibitem[Hahn, 1996]{Hahn1996HilbertTI}
Hahn, S.~L. (1996).
\newblock {\em {Hilbert} transforms in signal processing}.
\newblock Artech House, Norwood, MA.

\bibitem[Hameed et~al., 2022]{hameed2022literature}
Hameed, R.~S., Abd~Rahim, B. H.~A., Taher, M.~M., and Mokri, S.~S. (2022).
\newblock A literature review of various steganography methods.
\newblock {\em Journal of Theoretical and Applied Information Technology}, 100(5):1412--1427.

\bibitem[Huang, 2014]{hilberthuang}
Huang, N.~E. (2014).
\newblock {\em {Hilbert--Huang} transform and its applications}, volume~16.
\newblock World Scientific, Singapore.

\bibitem[Jha and Parakh, 2025]{jha2025quantumhilbert}
Jha, N. and Parakh, A. (2025).
\newblock {Quantum Hilbert Transform}.

\bibitem[Jha et~al., 2024]{jha2024ml}
Jha, N., Parakh, A., and Subramaniam, M. (2024).
\newblock A {ML}-based approach to quantum-augmented {HTTP} protocol.
\newblock In {\em 2024 {IEEE} International Conference on Quantum Computing and Engineering ({QCE})}, volume~2, pages 591--592. IEEE.

\bibitem[Jha et~al., 2025a]{jha2025quantum}
Jha, N., Parakh, A., and Subramaniam, M. (2025a).
\newblock Quantum key distribution: Bridging theoretical security proofs, practical attacks, and error correction for quantum-augmented networks.
\newblock {\em Cryptologia}, pages 1--58.

\bibitem[Jha et~al., 2025b]{jha2025towards}
Jha, N., Parakh, A., and Subramaniam, M. (2025b).
\newblock Towards a quantum-classical augmented network.
\newblock In {\em Quantum Computing, Communication, and Simulation V}, volume 13391, pages 72--86. SPIE.

\bibitem[Jha et~al., 2026]{jha2026novel}
Jha, N., Paudel, P., Parakh, A., and Subramaniam, M. (2026).
\newblock A novel quantum augmented framework to improve microgrid cybersecurity.

\bibitem[Kak, 1977]{kak1977discrete}
Kak, S. (1977).
\newblock The discrete finite {Hilbert} transform.
\newblock {\em Indian Journal of Pure and Applied Mathematics}, 8:1385--1390.

\bibitem[Kak, 1970]{Kak1970DHT}
Kak, S.~C. (1970).
\newblock The discrete {Hilbert} transform.
\newblock {\em Proceedings of the IEEE}, 58(4):585--586.

\bibitem[Kandregula, 2009]{kandregula2009}
Kandregula, R. (2009).
\newblock The basic discrete {Hilbert} transform with an information hiding application.

\bibitem[Khare et~al., 2022]{khare2022vhers}
Khare, S.~K., Gaikwad, N.~B., and Bajaj, V. (2022).
\newblock {VHERS}: A novel variational mode decomposition and {Hilbert} transform-based {EEG} rhythm separation for automatic {ADHD} detection.
\newblock {\em IEEE Transactions on Instrumentation and Measurement}, 71:1--10.

\bibitem[Kschischang, 2006]{Hilbert2006}
Kschischang, F.~R. (2006).
\newblock The {Hilbert} transform.
\newblock Technical Report~83, University of Toronto.

\bibitem[Li et~al., 2011]{Li2011}
Li, Z., Chi, H., Zhang, X., and Yao, J. (2011).
\newblock Optical single-sideband modulation using a fiber-{Bragg}-grating-based optical {Hilbert} transformer.
\newblock {\em IEEE Photonics Technology Letters}, 23(9):558--560.

\bibitem[Motlagh and Wiebe, 2024]{motlagh2024generalized}
Motlagh, D. and Wiebe, N. (2024).
\newblock Generalized quantum signal processing.
\newblock {\em PRX Quantum}, 5(2):020368.

\bibitem[Oppenheim and Schafer, 2021]{oppenheim2021discrete}
Oppenheim, A.~V. and Schafer, R.~W. (2021).
\newblock {\em Discrete-time signal processing}.
\newblock Pearson, Upper Saddle River, NJ, 4 edition.

\bibitem[Padala and Prabhu, 1997]{DHT2}
Padala, S.~K. and Prabhu, K. M.~M. (1997).
\newblock Systolic arrays for the discrete {Hilbert} transform.
\newblock {\em IEE Proceedings - Circuits, Devices and Systems}, 144(5):259--264.

\bibitem[Pang et~al., 2019]{pang2019signal}
Pang, C.-Y., Zhou, R.-G., Hu, B.-Q., Hu, W.-W., and El-Rafei, A. (2019).
\newblock Signal and image compression using quantum discrete cosine transform.
\newblock {\em Information Sciences}, 473:121--141.

\bibitem[Sharma et~al., 2022]{sharma2022hilbert}
Sharma, V.~K., Sharma, P.~C., Goud, H., and Singh, A. (2022).
\newblock {Hilbert} quantum image scrambling and graph signal processing-based image steganography.
\newblock {\em Multimedia Tools and Applications}, 81(13):17817--17830.

\bibitem[Shor, 1999]{shor1999polynomial}
Shor, P.~W. (1999).
\newblock Polynomial-time algorithms for prime factorization and discrete logarithms on a quantum computer.
\newblock {\em SIAM Review}, 41(2):303--332.

\bibitem[Tan et~al., 2007]{tan2007steganalysis}
Tan, S., Huang, J., and Shi, Y.~Q. (2007).
\newblock Steganalysis of enhanced {BPCS} steganography using the {Hilbert--Huang} transform based sequential analysis.
\newblock In {\em International Workshop on Digital Watermarking}, pages 112--126, Berlin, Heidelberg. Springer.

\bibitem[Wu et~al., 2011]{wu2011detection}
Wu, S., Li, W., and Shi, Y. (2011).
\newblock Detection for steganography based on {Hilbert--Huang} transform.
\newblock In {\em 2011 International Conference on Photonics, 3D-Imaging, and Visualization}, volume 8205, pages 262--267. SPIE.

\bibitem[Yin et~al., 2021]{yin2021quantum}
Yin, H., Lu, D., and Zhang, R. (2021).
\newblock Quantum windowed {Fourier} transform and its application to quantum signal processing.
\newblock {\em International Journal of Theoretical Physics}, 60:3896--3918.

\bibitem[Zhou et~al., 2017]{zhouquant2017}
Zhou, S., Loke, T., Izaac, J.~A., and Wang, J.~B. (2017).
\newblock Quantum {Fourier} transform in computational basis.
\newblock {\em Quantum Information Processing}, 16(3):82.

\end{thebibliography}

\section*{Author Contributions}

N.J. is the primary author of the manuscript and received intellectual inputs from A.P.

\section*{Data Availability Statement}

All data generated or analyzed during this study are included in this article and its supplementary information files.

\section*{Competing Interests}

The authors declare no competing interests.

\section*{Funding}

The authors received no specific funding for this work.

\appendix

\section{Proof of Unitarity of the QHT}
\label{appendixA}

Since we propose the QHT as a quantum operator, we must show that it is unitary.

\begin{proposition}
The quantum Hilbert transform $U_{\mathcal H}$ is unitary.
\end{proposition}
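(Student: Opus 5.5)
The plan is to use the factorization $U_{\mathcal H}=F_N^\dagger D_{\mathcal H}F_N$ from the Definition, together with the fact that unitary matrices are closed under products and adjoints. It therefore suffices to show two things: that $F_N$ is unitary, and that $D_{\mathcal H}$ is unitary. The argument then closes with the direct computation
\begin{equation*}
U_{\mathcal H}^\dagger U_{\mathcal H}
=
F_N^\dagger D_{\mathcal H}^\dagger F_N F_N^\dagger D_{\mathcal H} F_N
=
F_N^\dagger D_{\mathcal H}^\dagger D_{\mathcal H} F_N
=
F_N^\dagger F_N
=
I,
\end{equation*}
and the analogous computation for $U_{\mathcal H}U_{\mathcal H}^\dagger$. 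Since we are in finite dimension $N$, either one-sided identity alone would already suffice.

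For $F_N$, I will compute matrix elements using the defining action $F_N\ket{j}=\frac{1}{\sqrt N}\sum_k e^{2\pi\mathrm{i}jk/N}\ket{k}$:
\begin{equation*}
\bra{j'}F_N^\dagger F_N\ket{j}
=
\frac{1}{N}\sum_{k=0}^{N-1} e^{2\pi\mathrm{i}(j-j')k/N}.
\end{equation*}
This is a geometric sum of $N$-th roots of unity. It equals $1$ when $j=j'$. When $j\neq j'$, the ratio $\omega=e^{2\pi\mathrm{i}(j-j')/N}\neq 1$ satisfies $\omega^N=1$, so the sum is $(1-\omega^N)/(1-\omega)=0$. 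Hence $F_N^\dagger F_N=I$.

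For $D_{\mathcal H}$, it is diagonal in the basis $\{\ket{k}\}$, so $D_{\mathcal H}^\dagger D_{\mathcal H}=\mathrm{diag}(|h_k|^2)$. Checking the four cases in the definition of $h_k$, every entry is one of $1$, $-\mathrm{i}$, $1$, or $+\mathrm{i}$. Each has modulus one, so $D_{\mathcal H}^\dagger D_{\mathcal H}=I$.

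This is exactly where keeping the DC and Nyquist entries equal to $1$, rather than $0$ as in the classical mask \eqref{eq:dht-phase-mask}, is essential. As an alternative check, I would note that $U_{\mathcal H}=\exp(\frac{\pi}{2}G_{\mathcal H})$ with $G_{\mathcal H}$ skew-Hermitian, so $U_{\mathcal H}^\dagger=\exp(-\frac{\pi}{2}G_{\mathcal H})=U_{\mathcal H}^{-1}$.

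I do not expect a real obstacle here. The only step with any content is the root-of-unity orthogonality for $F_N$. There, care is needed to use $j-j'\not\equiv 0 \pmod N$ for $0\le j,j'\le N-1$, so that $\omega\neq1$.
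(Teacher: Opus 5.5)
Your proposal is correct and follows the paper's proof: you factor $U_{\mathcal H}=F_N^\dagger D_{\mathcal H}F_N$, use $|h_k|=1$ to get $D_{\mathcal H}^\dagger D_{\mathcal H}=I$, and collapse $U_{\mathcal H}^\dagger U_{\mathcal H}$ to $I$. The only difference is that you also verify the unitarity of $F_N$ through the roots-of-unity sum, which the paper takes as known.
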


\begin{proof}
The QFT operator $F_N$ is unitary, so

\begin{equation}
F_N^\dagger F_N=I.
\end{equation}

The operator $D_{\mathcal H}$ is diagonal, and every diagonal entry has unit magnitude:

\begin{equation}
|h_k|=1 \qquad \forall\, k. 
\end{equation}

Therefore,

\begin{equation}
D_{\mathcal H}^\dagger D_{\mathcal H}
=
I.
\end{equation}

Now,

\begin{align}
U_{\mathcal H}^\dagger U_{\mathcal H}
&=
\left(
F_N^\dagger D_{\mathcal H}F_N
\right)^\dagger
\left(
F_N^\dagger D_{\mathcal H}F_N
\right)\\
&=
F_N^\dagger
D_{\mathcal H}^\dagger
F_NF_N^\dagger
D_{\mathcal H}
F_N\\
&=
F_N^\dagger
D_{\mathcal H}^\dagger
D_{\mathcal H}
F_N\\
&=
F_N^\dagger F_N\\
&=
I.
\end{align}

Hence, $U_{\mathcal H}$ is unitary.
\end{proof}

\section{Helstrom Measurement for Decoding}
\label{app:helstrom-decoding}

For the $t$-th carrier state, Bob must distinguish between the two candidate stego states

\begin{equation}
\ket{\psi_{+}^{(t)}}
=
V_{+}(\epsilon)\ket{c_t},
\qquad
\ket{\psi_{-}^{(t)}}
=
V_{-}(\epsilon)\ket{c_t}.
\end{equation}

The corresponding density operators are

\begin{equation}
\rho_{+}^{(t)}
=
\ket{\psi_{+}^{(t)}}
\bra{\psi_{+}^{(t)}},
\qquad
\rho_{-}^{(t)}
=
\ket{\psi_{-}^{(t)}}
\bra{\psi_{-}^{(t)}}.
\end{equation}

Assuming equal prior probabilities, the optimal minimum-error binary measurement is the Helstrom measurement. Define

\begin{equation}
\Delta_t
=
\frac{1}{2}\rho_{+}^{(t)}
-
\frac{1}{2}\rho_{-}^{(t)}.
\end{equation}

The optimal POVM is

\begin{equation}
\left\{
\Pi_{+}^{(t)},
\Pi_{-}^{(t)}
\right\},
\end{equation}

where $\Pi_{+}^{(t)}$ is the projector onto the positive eigenspace of $\Delta_t$, and

\begin{equation}
\Pi_{-}^{(t)}
=
I-\Pi_{+}^{(t)}.
\end{equation}

Bob decides in favor of the positive embedding sign when the outcome associated with $\Pi_{+}^{(t)}$ occurs and decides in favor of the negative embedding sign otherwise. The corresponding minimum probability of error is

\begin{equation}
P_{e,t}^{\star}
=
\frac{1}{2}
\left(
1-
\sqrt{
1-
\left|
\left\langle
\psi_{+}^{(t)}
\middle|
\psi_{-}^{(t)}
\right\rangle
\right|^2
}
\right).
\end{equation}

For the QHT-domain embedding operation,

\begin{equation}
V_{+}(\epsilon)
=
\exp\left(
\epsilon G_{\mathcal H}
\right),
\qquad
V_{-}(\epsilon)
=
\exp\left(
-\epsilon G_{\mathcal H}
\right).
\end{equation}

Since $G_{\mathcal H}$ is skew-Hermitian,

\begin{equation}
V_{+}^{\dagger}(\epsilon)
=
V_{-}(\epsilon),
\end{equation}

and therefore,

\begin{equation}
\left\langle
\psi_{+}^{(t)}
\middle|
\psi_{-}^{(t)}
\right\rangle
=
\bra{c_t}
\exp\left(
-2\epsilon G_{\mathcal H}
\right)
\ket{c_t}.
\end{equation}

Thus, the ideal decoding error is controlled by the overlap between the two QHT-perturbed states. As $\epsilon$ increases, the two states become more distinguishishable, reducing the ideal decoding error.

\end{document}